\newtheoremstyle{mytheoremstyle}{3pt}{3pt}{\itshape}{0cm}{\bfseries}{}{0.4em}{}
\theoremstyle{mytheoremstyle}
\newtheorem{assumption}{Assumption}
\newtheorem{proposition}{Proposition}
\begin{document}

\makeatletter
\newcommand\widebar[1]{\accentset{{\cc@style\underline{\mskip10mu}}}{#1}}
\def\wideubar{\underaccent{{\cc@style\underline{\mskip10mu}}}}
\makeatother

\makeatletter
\newcommand\widecheck[1]{\mathpalette\@widecheck{#1}}
\def\@widecheck#1#2{%
    \setbox\z@\hbox{\m@th$#1#2$}%
    \setbox\tw@\hbox{\m@th$#1%
       \widehat{%
          \vrule\@width\z@\@height\ht\z@
          \vrule\@height\z@\@width\wd\z@}$}%
    \dp\tw@-\ht\z@
    \@tempdima\ht\z@ \advance\@tempdima2\ht\tw@ \divide\@tempdima\thr@@
    \setbox\tw@\hbox{%
       \raise\@tempdima\hbox{\scalebox{1}[-1]{\lower\@tempdima\box\tw@}}}%
    {\ooalign{\box\tw@ \cr \box\z@}}}
\makeatother

\title{Vector Orthogonal Chirp Division Multiplexing Over Doubly Selective Channels}

\author{
Deyu Lu,~\IEEEmembership{Student Member,~IEEE}, Xiaoli Ma,~\IEEEmembership{Fellow,~IEEE}, and
Yiyin Wang,~\IEEEmembership{Senior Member,~IEEE}
\thanks{Deyu Lu and Yiyin Wang are with the State Key Laboratory of Submarine Geoscience, School of Automation and Sensing, Shanghai Jiao Tong University, Shanghai 200240, China.

Xiaoli Ma is with the School of Electrical and Computer Engineering,
Georgia Institute of Technology, Atlanta 30332, USA.}
}

\maketitle

\begin{abstract}
In this letter, we extend orthogonal chirp division multiplexing (OCDM) to vector OCDM (VOCDM) to provide more design freedom to deal with doubly selective channels. The VOCDM modulation is implemented by performing $M$ parallel $N$-size inverse discrete Fresnel transforms (IDFnT). Based on the complex exponential basis expansion model (CE-BEM) for doubly selective channels, we derive the VOCDM input-output relationship, and show performance tradeoffs of VOCDM with respect to (w.r.t.) its modulation parameters $M$ and $N$. Specifically, we investigate the diversity and peak-to-average power ratio (PAPR) of VOCDM w.r.t. $M$ and $N$. Under doubly selective channels, VOCDM exhibits superior diversity performance as long as the parameters $M$ and $N$ are configured to satisfy some constraints from the delay and the Doppler spreads of the channel, respectively. Furthermore, the PAPR of VOCDM signals decreases with a decreasing $N$. These theoretical findings are verified through numerical simulations.
\end{abstract}

\begin{IEEEkeywords}
Orthogonal chirp division multiplexing, doubly selective channel, diversity, peak-to-average ratio.
\end{IEEEkeywords}

\section{Introduction}

Orthogonal frequency division multiplexing (OFDM) plays an important role in wireless communications \cite{Wang2000}. It achieves great success for static multipath channels, but it is not the best fit for high mobility channels. Because OFDM is susceptible to Doppler shifts, which induce inter-carrier interference and degrade the system performance. Considering the limitations of OFDM in high mobility scenarios, novel modulations such as orthogonal time frequency space (OTFS) \cite{Hadani2017} have been developed for doubly selective channels. It has been proved in \cite{Deng2025} that OTFS is equivalent to vector OFDM (VOFDM) \cite{Xia2001} and orthogonal signal division multiplexing (OSDM) \cite{Suehiro2002}. Instead of modulating the symbols in the time-frequency domain as OFDM does, OTFS multiplexes the symbols in the delay-Doppler domain, where the channel representation is sparse \cite{Raviteja2018, Raviteja2019a}. Compared to OFDM, OTFS signals have a lower peak-to-average power ratio (PAPR) with a small size of Doppler grids \cite{Surabhi2019a}. Moreover, it has been proved that the asymptotic diversity order of the uncoded OTFS systems is one \cite{Surabhi2019b}. However, a high diversity order is observed for OTFS within the finite SNR region \cite{Surabhi2019b}.

Another novel modulation scheme called orthogonal chirp division multiplexing (OCDM) has been proposed in \cite{Ouyang2016}. Through inverse discrete Fresnel transform (IDFnT), OCDM modulates the transmitted symbols in the Fresnel domain. Thanks to the double-spreading property in the time and frequency domains, OCDM is robust to time burst interference (TBI) and narrowband interference (NBI) \cite{Omar2021}. The potential of the chirp waveform is also explored in affine frequency division multiplexing (AFDM) \cite{Bemani2021, Bemani2023}. Through discrete affine Fourier transform (DAFT), AFDM is able to adjust the parameters of chirp bases. Compared to OCDM, AFDM is more capable for doubly selective channels, as it enables maximum diversity. However, both OCDM and AFDM suffer from high PAPR issues and are lack of design flexibility to tradeoff performance, complexity, and PAPR.

In this letter, we extend OCDM to vector OCDM (VOCDM) to obtain more design freedom. Based on the complex-exponential basis expansion channel model (CE-BEM), we first establish the input-output relationship for VOCDM over doubly selective channels. Moreover, we show performance tradeoffs of a VOCDM system with respect to (w.r.t.) its modulation parameters $M$ and $N$. To be more specific, we analyze the diversity performance of VOCDM under doubly selective channels. We derive a diversity upper bound for VOCDM and provide the conditions w.r.t. $M$ and $N$ to maximize this bound. We also analyze the PAPR performance of VOCDM and show that the PAPR of VOCDM signals decreases with a reducing $N$. Numerical simulation results further corroborate our theoretical findings.

Notations: The entry in the $m$-th row and $n$-th column of the matrix ${\bf X}$ is denoted by $[{\bf X}]_{m,n}$, $m = 0, \dots, M-1$, $n = 0, \dots, N-1$. The vector ${\bf 0}_K$ (${\bf 1}_K$) denotes the all-zero (all-one) column vector of length $K$. The matrix ${\bf I}_K$ denotes the identity matrix of size $K$. The matrix ${\rm diag}({\bf x})$ denotes a diagonal matrix with ${\bf x}$ as its diagonal, while ${\rm circ}({\bf x})$ denotes a circulant matrix with ${\bf x}$ as its first column. The expression ${\rm mod}(a, b)$ denotes the modulo operation. The notation $|\cdot|$ counts the size of a set. Moreover, the operators $\otimes$, $(\cdot)^T$, $(\cdot)^{\mathcal H}$, $(\cdot)^{-1}$, $\lfloor \cdot \rfloor$, $\lceil \cdot \rceil$, $\| \cdot \|_k$, ${\mathcal R}(\cdot)$, ${\mathbb P}(\cdot)$, and ${\mathbb E}[ \cdot ]$ represent Kronecker product, transpose, conjugate transpose, inverse, floor, ceiling, $k$-norm, rank, probability, and expectation, respectively.

\section{System Model}

In this section, we establish the system model for VOCDM under doubly selective channels.

\subsection{VOCDM modulation}

Without loss of generality, we consider a single VOCDM block for transmission. The inter-block interference (IBI) is handled by the cyclic prefix (CP) added at the beginning of the transmitted block. The vector ${\bf s} = \left[ s_0, \dots, s_{K-1} \right]^T$ of length $K = MN$ denotes the transmitted data, where each data symbol $s_n$ is drawn from the constellation $\mathbb S$. The VOCDM modulation is given as follows:
\begin{equation} \label{VOCDM_modu}
    {\bf u} = \left( {\bf \Phi}_N^{\mathcal H} \otimes {\bf I}_M \right) {\bf s},
\end{equation}
where ${\bf u} = \left[ u_0, \dots, u_{K-1} \right]^T$ is the modulated vector, and ${\bf \Phi}_N$ is the size-$N$ DFnT matrix. Its element in the $m$-th row and $n$-th column is given by
\begin{equation}
    \left[ {\bf \Phi}_N \right]_{m,n} = \frac{1}{\sqrt{N}} e^{-j\frac{\pi}{4} + j\frac{\pi}{N} \left( m-n+\frac{{\rm mod}(N, 2)}{2} \right)^2}. \label{DFnT_mtx}
\end{equation}
Based on (\ref{VOCDM_modu}), the VOCDM modulation is a tradeoff between the single carrier (SC) and OCDM modulations.

\subsection{Doubly Selective Channel}

Doubly selective channels are characterized by time- and frequency-selectivity/dispersion. Based on the CE-BEM\cite{Visintin1996}, the doubly selective channel can be modeled as follows:
\begin{equation} \label{BEM_channel_model}
    h(l, i) = \sum_{q=-Q}^Q h_{l,q} e^{j\omega_q i}, \ l = 0, \dots, L,
\end{equation}
where $\omega_q = 2\pi q/K$, $L = \lfloor \tau_{\rm max} / T_s \rfloor$, and $Q = \lceil f_{\rm max} K T_s \rceil$ with $T_s$ being the sampling period. The maximum delay spread $\tau_{\rm max}$ and the maximum Doppler spread $f_{\rm max}$ are known and bounded. Moreover, the channel coefficients $h_{l,q}$, $l = 0, \dots, L$, $q = -Q, \dots, Q$ with the total number of $\rho = (L+1)(2Q+1)$ are invariant within a transmitted block.

Suppose that the timing and frequency synchronizations have been performed at the receiver. The received $k$-th sample is given by $r_k = \sum_{l=0}^L h(l, k) u_{{\rm mod}(k-l, K)} + v_k$, where $v_k$ denotes the noise. Collecting the samples $r_k$, $k = 0, \dots, K-1$ into a vector ${\bf r} = \left[ r_0, \dots, r_{K-1} \right ]^T$, we obtain
\begin{equation} \label{recei_VOCDM_signal_2}
    {\bf r} = {\bf H} {\bf u} + {\bf v},
\end{equation}
where the noise vector ${\bf v} = \left[ v_0, \dots, v_{K-1} \right]^T$. It follows zero-mean complex Gaussian distribution with the covariance matrix $\sigma^2 {\bf I}_K$, i.e., ${\bf v} \sim {\mathcal {CN}}({\bf 0}_K, \sigma^2 {\bf I}_K)$. Moreover, the channel matrix ${\bf H}$ is given by
\begin{equation} \label{BEM_channel_mtx}
    {\bf H} = \sum_{l=0}^{L} \sum_{q=-Q}^{Q} h_{l,q} {\bf D}_K^q {\bf \Pi}_K^l,
\end{equation}
where ${\bf D}_K = {\rm diag} \left( \left[ 1, e^{j\frac{2\pi}{K}}, \dots, e^{j\frac{2\pi}{K}(K-1)} \right]^T \right)$ denotes a size-$K$ diagonal matrix and ${\bf \Pi}_K = {\rm circ} \left( \left[ 0, 1, {\bf 0}_{K-2}^T \right]^T \right)$ denotes a size-$K$ permutation matrix.

\subsection{VOCDM demodulation}

Using (\ref{VOCDM_modu}) and (\ref{recei_VOCDM_signal_2}), we derive the demodulated VOCDM signal as follows:
\begin{equation}
    {\bf y} = ({\bf \Phi}_N \otimes {\bf I}_M) {\bf r} = {\bf H}_{\rm eff} {\bf s} + \bar {\bf v}, \label{IO_relation}
\end{equation}
where ${\bf H}_{\rm eff} = ({\bf \Phi}_N \otimes {\bf I}_M) {\bf H} \left( {\bf \Phi}_N^{\mathcal H} \otimes {\bf I}_M \right)$ is defined as the effective channel matrix for VOCDM, and $\bar {\bf v} = ({\bf \Phi}_N \otimes {\bf I}_M) {\bf v}$ is the transformed noise vector. With the derivation being provided in Appendix~\ref{append_eff_chn_mtx}, the effective channel matrix ${\bf H}_{\rm eff}$ is obtained as
\begin{equation} \label{eff_chn_mtx}
    {\bf H}_{\rm eff} = \sum_{l=0}^{L} \sum_{q=-Q}^{Q} \alpha_q h_{l,q} {\bf D}_K^q {\bf \Pi}_K^{l+qM},
\end{equation}
where $\alpha_q = e^{j\frac{\pi}{N}q({\rm mod}(N, 2)-q)}$. Based on (\ref{eff_chn_mtx}), the channel coefficient $h_{l,q}$ is carried on the $o_{l, q}$-th sub-diagonal of ${\bf H}_{\rm eff}$, i.e., $\left[ {\bf H}_{\rm eff} \right]_{{\rm mod} \left( k+o_{l, q}, K \right), k}$, $k = 0, \dots, K-1$, where $o_{l, q} = {\rm mod}(l+qM, K)$. Let us define the set
\begin{equation} \label{permu_orders}
    \mathbb O(L, Q, M, N) = \bigcup_{l=0}^L \bigcup_{q=-Q}^{Q} \{ o_{l, q} \},
\end{equation}
to collect the numbers $o_{l, q}$, $l = 0, \dots, L$, $q = -Q, \dots, Q$. Each entry of ${\bf H}_{\rm eff}$ contains at most one channel coefficient if and only if $|\mathbb O(L, Q, M, N)| = \rho$.

\begin{proposition} \label{prop_para_cond}
    When the VOCDM parameters $M$ and $N$ simultaneously satisfy $M \geq L+1$ and $N \geq 2Q+1$, we achieve $|\mathbb O(L, Q, M, N)| = \rho$.
\end{proposition}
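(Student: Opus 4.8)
The plan is to show that the condition $|\mathbb{O}(L,Q,M,N)| = \rho$ is equivalent to the injectivity of the index map $(l,q) \mapsto o_{l,q}$, and then to establish this injectivity by a two-stage modular argument that separates the delay index from the Doppler index. Since $\mathbb{O}(L,Q,M,N)$ is the union of the singletons $\{o_{l,q}\}$ taken over the $\rho = (L+1)(2Q+1)$ pairs $(l,q)$ with $l \in \{0,\dots,L\}$ and $q \in \{-Q,\dots,Q\}$, we always have $|\mathbb{O}(L,Q,M,N)| \le \rho$, with equality precisely when all the $o_{l,q}$ are pairwise distinct. So it suffices to prove that no two distinct pairs produce the same value.

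Suppose, toward establishing injectivity, that $o_{l,q} = o_{l',q'}$ for two pairs $(l,q)$ and $(l',q')$. By definition $o_{l,q} = {\rm mod}(l+qM, K)$ with $K = MN$, so this equality means
\begin{equation} \label{collision}
    (l - l') + (q - q')M \equiv 0 \pmod{MN}.
\end{equation}
First I would reduce \eqref{collision} modulo $M$. The term $(q-q')M$ is divisible by $M$, hence $l - l' \equiv 0 \pmod{M}$. Because $l, l' \in \{0,\dots,L\}$, the difference satisfies $|l - l'| \le L$, and the hypothesis $M \ge L+1$ forces $|l-l'| \le L < M$. The only multiple of $M$ in the range $(-M, M)$ is zero, so $l = l'$.

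With $l = l'$ in hand, \eqref{collision} collapses to $(q-q')M \equiv 0 \pmod{MN}$, which is equivalent to $q - q' \equiv 0 \pmod{N}$. Since $q, q' \in \{-Q,\dots,Q\}$, the difference obeys $|q - q'| \le 2Q$, and the hypothesis $N \ge 2Q+1$ gives $|q-q'| \le 2Q < N$; the same ``unique multiple in an open interval'' reasoning then yields $q = q'$. Hence $(l,q) = (l',q')$, the map $(l,q) \mapsto o_{l,q}$ is injective, and consequently $|\mathbb{O}(L,Q,M,N)| = \rho$, as claimed.

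I do not anticipate a genuine obstacle here, as the argument is elementary number theory once the problem is recast as injectivity; the only point demanding care is the ordering of the two reductions. The two conditions must be peeled off in sequence rather than simultaneously: reducing modulo $M$ first eliminates the Doppler term cleanly so that the bound $M \ge L+1$ pins down $l = l'$, and only afterward does the residual relation become a pure statement modulo $N$ to which the bound $N \ge 2Q+1$ applies. Attempting to treat both indices at once against the single modulus $MN$ would entangle them and obscure why the two separate parameter constraints are exactly what is needed.
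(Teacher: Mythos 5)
Your proof is correct, but it follows a genuinely different route from the paper's. You recast $|\mathbb O(L, Q, M, N)| = \rho$ as injectivity of the map $(l,q) \mapsto o_{l,q}$ and establish it by sequential modular reduction: a collision forces $(l-l') + (q-q')M \equiv 0 \pmod{MN}$; reducing modulo $M$ and invoking $M \geq L+1$ gives $l = l'$, after which the relation collapses to $q \equiv q' \pmod{N}$ and $N \geq 2Q+1$ gives $q = q'$. The paper instead computes each slice $\mathbb O_q = \bigcup_{l=0}^L \{ {\rm mod}(l+qM, MN) \}$ explicitly: the parameter constraints let it remove the modulo and write $\mathbb O_q$ as the interval $\{ qM, \dots, L+qM \}$ for $q \geq 0$ and $\{ MN+qM, \dots, MN+L+qM \}$ for $q < 0$, and then orders all $2Q+1$ intervals on the integer line to conclude they are pairwise disjoint, whence $|\mathbb O(L, Q, M, N)| = (2Q+1)(L+1) = \rho$. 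Your argument is more compact: it avoids the case split on the sign of $q$ and the explicit ordering chain, and it makes transparent why exactly these two hypotheses are needed, since each one pins down one index in a Chinese-remainder-style separation. What the paper's approach buys in exchange is an explicit description of $\mathbb O(L, Q, M, N)$ as a union of intervals, i.e., precisely which sub-diagonals of ${\bf H}_{\rm eff}$ carry channel coefficients; that structural picture is what the surrounding text exploits when it writes out $\mathbb O$ explicitly for the SC and OCDM special cases to show their diversity bounds fall short of $\rho$.
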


\begin{proof}
    See Appendix~\ref{append_para_cond}.
\end{proof}

{\it Proposition~\ref{prop_para_cond}} establishes that with an appropriate selection of the modulation parameters $M$ and $N$, VOCDM guarantees each entry of ${\bf H}_{\rm eff}$ contains at most one channel coefficient. This structure of the effective channel matrix benefits the diversity order \cite{Bemani2023}. In the next section, we analyze how $M$ and $N$ affect the diversity performance of VOCDM in detail.

\section{Diversity Analysis}

Let us use the vector ${\bf h} = [h_{0,-Q}, \dots, h_{L,-Q}, \dots, h_{L,Q}]^T$ to collect $\rho$ channel coefficients. The following assumptions are considered for diversity analysis.

\begin{assumption} \label{assump_full_rank}
    The channel coefficients ${\bf h}$ follow zero-mean complex Gaussian distribution with a full rank covariance matrix ${\bf R}_h = \mathbb E[{\bf h} {\bf h}^{\mathcal H}]$, i.e., ${\bf h} \sim {\mathcal {CN}}({\bf 0}_{\rho}, {\bf R}_h)$.
\end{assumption}

\begin{assumption} \label{assump_ML}
    The channel coefficients ${\bf h}$ are known at the receiver, and the maximum likelihood (ML) detector is used for symbol detection.
\end{assumption}

Based on the input-output relationship (\ref{IO_relation}), the conditional pair-wise error probability (PEP) is upper bounded by \cite{Tarokh1998}
\begin{equation} \label{cond_PEP_1}
    {\mathbb P} \left( \left. {\bf s} \rightarrow {\bf s}^{\prime} \right| {\bf h} \right) \leq \exp \left( -\frac{\| {\bf H}_{\rm eff} {\bf e} \|_2^2}{4\sigma^2} \right).
\end{equation}
where ${\bf s}^{\prime} \in \mathbb S^{K \times 1}$ and ${\bf e} = {\bf s} - {\bf s}^{\prime}$ are the wrong detection vector and error vector, respectively. Using the expression of ${\bf H}_{\rm eff}$ in (\ref{eff_chn_mtx}), we have
\begin{equation} \label{error_dist}
    {\bf H}_{\rm eff} {\bf e} = {\bf C}({\bf s}, {\bf e}) {\bf h},
\end{equation}
where the matrix ${\bf C}({\bf s}, {\bf e})$ of size $K \times \rho$ is defined as
\begin{equation} \label{def_C}
    {\bf C}({\bf s}, {\bf e}) = [{\bf c}_{0, -Q}, \dots, {\bf c}_{L, -Q}, \dots, {\bf c}_{L, Q}],
\end{equation}
with the vectors ${\bf c}_{l,q}$ being given as
\begin{equation} \label{def_c}
    {\bf c}_{l,q} = \alpha_q {\bf D}_K^q {\bf \Pi}_K^{l+qM} {\bf e}.
\end{equation}
Let us decompose ${\bf R}_h = {\bf B}^{\mathcal H} {\bf B}$. Thus, the vector ${\bf h}$ can be decomposed as ${\bf h} = {\bf B} \bar {\bf h}$, where $\bar {\bf h}$ is the whitened Gaussian variable with zero-mean and covariance matrix ${\bf I}_{\rho}$. Plugging (\ref{error_dist}) and ${\bf h} = {\bf B} \bar {\bf h}$ into (\ref{cond_PEP_1}), we arrive at
\begin{equation} \label{cond_PEP_2}
    \!\! {\mathbb P} \left( \left. {\bf s} \rightarrow {\bf s}^{\prime} \right| \bar {\bf h} \right) \leq \exp \left( -\frac{\bar {\bf h}^{\mathcal H} {\bf B}^{\mathcal H} {\bf C}({\bf s}, {\bf e})^{\mathcal H} {\bf C}({\bf s}, {\bf e}) {\bf B} \bar {\bf h}}{4\sigma^2} \right).
\end{equation}
Taking the expectation of both sides of (\ref{cond_PEP_2}) w.r.t. $\bar {\bf h}$, we obtain an upper bound for the ergodic PEP ${\mathbb P}({\bf s} \rightarrow {\bf s}^{\prime})$:
\begin{align}
    \!\! {\mathbb P}({\bf s} \rightarrow {\bf s}^{\prime}) \leq \prod_{i=1}^{{\mathcal R}({\bf C}({\bf s}, {\bf e}))} \left( 1 \!+\! \frac{\lambda_i}{4\sigma^2} \right)^{-1} \leq \frac{(4\sigma^2)^{{\mathcal R}({\bf C}({\bf s}, {\bf e}))}}{\prod_{i=1}^{{\mathcal R}({\bf C}({\bf s}, {\bf e}))} \lambda_i}, \label{ave_PEP}
\end{align}
where $\lambda_i$, $i = 1, \dots, {\mathcal R}({\bf C}({\bf s}, {\bf e}))$ are non-zero eigenvalues of ${\bf B}^{\mathcal H} {\bf C}({\bf s}, {\bf e})^{\mathcal H} {\bf C}({\bf s}, {\bf e}) {\bf B}$. Based on (\ref{ave_PEP}), the diversity for VOCDM over doubly selective channels is defined as
\begin{equation} \nonumber
    G_d = \min_{{\bf s} \in \mathbb S^{K \times 1}} \min_{{\bf e} \ne {\bf 0}_K} {\mathcal R}({\bf C}({\bf s}, {\bf e})),
\end{equation}
Suppose that $K \geq \rho$, thus the maximum diversity provided by the doubly selective channel is $\rho$. A sufficient and necessary condition to enable the maximum diversity $\rho$ is that the matrix ${\bf C}({\bf s}, {\bf e})$ is full rank for all ${\bf e} \neq {\bf 0}_K$. However, VOCDM cannot enable the maximum diversity without precoding. It can be illustrated using an example. Given the error ${\bf e}_0 = \epsilon {\bf 1}_K$, where $\epsilon$ is a non-zero error, the vectors ${\bf c}_{0,0} = \dots = {\bf c}_{L,0} = \epsilon {\bf 1}_K$ are dependent on each other. As a result, the matrix ${\bf C}({\bf s}, {\bf e}_0)$ is rank deficient and the VOCDM diversity $G_d < \rho$. Moreover, we observe the following fact that a necessary condition to enable the maximum diversity is to make each entry of ${\bf H}_{\rm eff}$ contain at most one channel coefficient, i.e., $|\mathbb O(L, Q, M, N)|=\rho$. If the entry of ${\bf H}_{\rm eff}$ includes more than one channel coefficient, the incoherent summation of channel coefficients will cause channel fading and diversity loss. Although VOCDM cannot enable the full diversity, we can still design $M$ and $N$ to fulfill the aforementioned necessary condition based on {\it Proposition~\ref{prop_para_cond}}.

Furthermore, we emphasis that the appearance of ${\bf e}_0= \epsilon {\bf 1}_K$ is only possible for a few data realizations, e.g., ${\bf s} = s {\bf 1}_K$, where $s \in \mathbb S$. The probability of such data realizations is too low to make the error ${\bf e}_0$ dominate in the finite SNR region. It is similar to OTFS, where the diversity loss caused by the error ${\bf e}_0$ becomes observable only in the extremely high SNR region \cite{Surabhi2019b}. To capture the BER of VOCDM across the finite SNR region, we introduce another definition of diversity \cite{Mounir2009}:
\begin{equation} \label{diversity}
    G_d({\bf s}) = \min_{{\bf e} \ne {\bf 0}_K} {\mathcal R}({\bf C}({\bf s}, {\bf e})),
\end{equation}
which is the data-dependent diversity. In the finite SNR region, the BER of VOCDM is governed by the majority of $G_d({\bf s})$ instances \cite{Mounir2009}. The following proposition demonstrates that the data-dependent diversity of VOCDM is influenced by the choice of the parameters $M$ and $N$:
\begin{proposition} \label{prop_div_UB}
    For any ${\bf s} \in \mathbb S^{K \times 1}$, the data-dependent diversity of VOCDM is upper bounded by
    \begin{equation} \label{div_UB}
        G_d({\bf s}) \leq |\mathbb O(L, Q, M, N)|.
    \end{equation}
\end{proposition}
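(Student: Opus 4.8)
The plan is to bound the minimum directly by exhibiting a single admissible error vector that attains rank $|\mathbb{O}(L,Q,M,N)|$. Since $G_d(\mathbf{s}) = \min_{\mathbf{e}\neq\mathbf{0}_K}\mathcal{R}(\mathbf{C}(\mathbf{s},\mathbf{e}))$ is a minimum over the error vectors $\mathbf{e}=\mathbf{s}-\mathbf{s}'$ with $\mathbf{s}'\in\mathbb{S}^{K\times 1}$, any particular admissible $\mathbf{e}^{\star}$ gives $G_d(\mathbf{s})\leq\mathcal{R}(\mathbf{C}(\mathbf{s},\mathbf{e}^{\star}))$. First I would take $\mathbf{e}^{\star}$ to be supported on a single coordinate: choosing $\mathbf{s}'$ to agree with $\mathbf{s}$ in every entry except the $k_0$-th one makes $\mathbf{e}^{\star}$ have a single nonzero entry $\epsilon\neq 0$ at position $k_0$, and this is admissible for every $\mathbf{s}$ whenever $|\mathbb{S}|\geq 2$.

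Next I would evaluate the columns of $\mathbf{C}(\mathbf{s},\mathbf{e}^{\star})$ in (\ref{def_c}) for this spike error. Using $[\mathbf{\Pi}_K^{l+qM}\mathbf{e}^{\star}]_k=[\mathbf{e}^{\star}]_{\mathrm{mod}(k-o_{l,q},K)}$ (recall $\mathbf{\Pi}_K^K=\mathbf{I}_K$, so the exponent $l+qM$ reduces to $o_{l,q}$) together with the diagonal action of $\mathbf{D}_K^q$, every column collapses to $\mathbf{c}_{l,q}=\alpha_q\epsilon\,[\mathbf{D}_K^q]_{k^{\star},k^{\star}}\,\mathbf{g}_{k^{\star}}$, where $k^{\star}=\mathrm{mod}(k_0+o_{l,q},K)$ and $\mathbf{g}_{k^{\star}}$ denotes the $k^{\star}$-th canonical basis vector of $\mathbb{C}^K$. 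In other words, the spike error turns each column into a nonzero scalar multiple of a single standard basis vector whose index is determined solely by the sub-diagonal order $o_{l,q}$.

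Finally I would count the independent columns. Because $o\mapsto\mathrm{mod}(k_0+o,K)$ is a bijection on the residues modulo $K$, two columns $\mathbf{c}_{l_1,q_1}$ and $\mathbf{c}_{l_2,q_2}$ share the same support index if and only if $o_{l_1,q_1}=o_{l_2,q_2}$, in which case they are collinear; columns with distinct orders are supported on distinct coordinates and are therefore orthogonal. Since all the multipliers $\alpha_q\epsilon[\mathbf{D}_K^q]_{k^{\star},k^{\star}}$ are nonzero, the number of linearly independent columns equals the number of distinct values among $\{o_{l,q}\}$, which is exactly $|\mathbb{O}(L,Q,M,N)|$ by (\ref{permu_orders}). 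Hence $\mathcal{R}(\mathbf{C}(\mathbf{s},\mathbf{e}^{\star}))=|\mathbb{O}(L,Q,M,N)|$, and the bound (\ref{div_UB}) follows.

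The main obstacle is conceptual rather than computational: recognizing that a single-entry error is the right witness, since it forces every column of $\mathbf{C}$ onto a canonical axis indexed by $o_{l,q}$ and thereby converts the rank into a count of distinct sub-diagonal positions. The remaining points to verify carefully are that such a spike error is admissible for an arbitrary transmitted $\mathbf{s}$ and that none of the surviving column multipliers vanish, both of which are straightforward given that $\alpha_q\neq 0$ and that the constellation has at least two points.
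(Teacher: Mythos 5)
Your proof is correct and follows essentially the same route as the paper: both use a single-entry (spike) error vector as the witness, which collapses every column $\mathbf{c}_{l,q}$ to a nonzero multiple of a canonical basis vector indexed by $o_{l,q}$, reducing the rank to a count of distinct sub-diagonal positions. The only cosmetic difference is that the paper partitions the columns into blocks $\mathbf{C}_{\tau}$ by sub-diagonal order and applies rank subadditivity (each block having rank one under the spike error), whereas you compute the exact rank $\mathcal{R}(\mathbf{C}(\mathbf{s},\mathbf{e}^{\star})) = |\mathbb{O}(L,Q,M,N)|$ directly from the mutual orthogonality of columns with distinct support.
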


\begin{proof}
    See Appendix~\ref{append_div_UB}.
\end{proof}

The diversity upper bound $|\mathbb O(L, Q, M, N)|$ is useful for implementing VOCDM systems, because it links the channel parameters $L$, $Q$ and the VOCDM parameters $M$, $N$, and provides effective guidance for VOCDM parameter design. According to {\it Proposition~\ref{prop_para_cond}}, when $M \geq L+1$ and $N \geq 2Q+1$, the upper bound $|\mathbb O(L, Q, M, N)|$ reaches its maximum value $\rho$, i.e., $|\mathbb O(L, Q, M, N)| = \rho$. It is also the aforementioned necessary condition to enable the maximum diversity. As two special cases of VOCDM, both SC and OCDM fail to meet the aforementioned necessary condition. According to (\ref{permu_orders}), we have $\mathbb O(L, Q, 12, 1) = \{ 0, \dots, L \}$ for SC and $\mathbb O(L, Q, 1, 12) = \{ 0, \dots, L+Q \} \cup \{ K-Q, \dots, K-1 \}$ for OCDM. Therefore, the diversity upper bound $|\mathbb O(L, Q, M, N)|$ for SC and OCDM is given by $L+1$ and $L+2Q+1$, respectively, which are both strictly lower than $\rho$ when $L > 0$ and $Q > 0$. It explains why SC and OCDM are sub-optimal under doubly selective channels.

\begin{figure}
    \centering
    \includegraphics[width=1\linewidth]{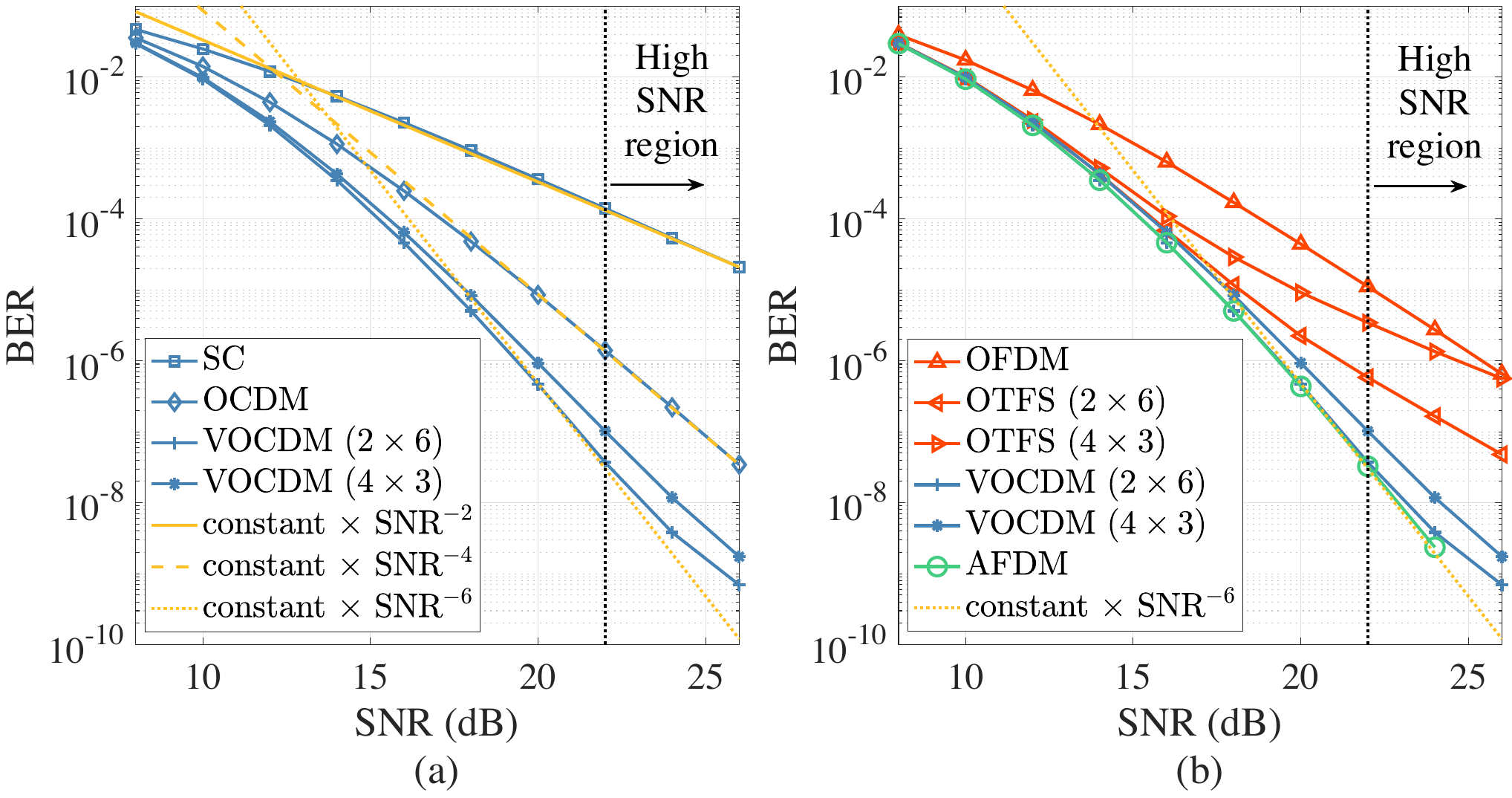}
    \caption{BER of VOCDM and its competitors under doubly selective channels with $L = 1$ and $Q = 1$. The ML detector is used.}
    \label{fig_BER}
\end{figure}

Fig.~\ref{fig_BER}(a) presents the BER performance of VOCDM with different parameter settings. Due to the high complexity of the ML detector, the block size is set as $K = 12$ and the QPSK constellation is adopted. The SNR is defined as $1/\sigma^2$. The tested channel takes independent identically distributed (i.i.d.) coefficients with $L = 1$ and $Q = 1$. Therefore, the maximum diversity provided by the channel is $6$. For SC, OCDM, and two VOCDM schemes with the parameters $(M,N) = (12,1)$, $(1,12)$, $(2, 6)$, and $(4, 3)$, respectively, their data-dependent diversity $G_d({\bf s})$ is upper bounded by $|\mathbb O(L, Q, M, N)| = 2$, $4$, $6$, and $6$, accordingly. As shown in Fig.~\ref{fig_BER}(a), the observed diversity of the tested VOCDM schemes almost approach their respective upper bound in the finite SNR region. It implies that $G_d({\bf s}) = |\mathbb O(L, Q, M, N)|$ for most data realizations. The VOCDM schemes with parameters $(2, 6)$ and $(4, 3)$ cannot achieve the maximum diversity in the high SNR region, because VOCDM cannot enable the maximum diversity under doubly selective channels, i.e., $G_d < \rho$.

Fig.~\ref{fig_BER}(b) compares the BER performance of VOCDM with other modulation schemes. Similar to VOCDM, the BER performance of OTFS depends on its modulation parameters $M$ and $N$, which determine the sizes of the delay and Doppler domain grids, respectively. Among all tested OTFS schemes, the best configuration is given by $(M, N) = (2, 6)$, but it performs worse than the VOCDM schemes with parameters $(2, 6)$ and $(4, 3)$. We also test the BER performance of AFDM \cite{Bemani2023}. With the appropriate modulation parameters, AFDM achieves the maximum diversity in the high SNR region.

\section{PAPR Analysis}

For simplicity, we consider the normalized constellation set $\mathbb S$ for PAPR analysis. Moreover, we assume that the data vector ${\bf s}$ is randomly generated from $\mathbb S^{K \times 1}$. The covariance matrix for ${\bf s}$ is given by ${\bf I}_K$. Since the VOCDM modulation matrix ${\bf \Phi}_N^{\mathcal H} \otimes {\bf I}_M$ is unitary, the covariance matrix for ${\bf u}$ is also given by ${\bf I}_K$. Therefore, the instantaneous PAPR of ${\bf u}$ is given by
\begin{equation} \nonumber
    {\rm PAPR}({\bf s}) = \frac{\| {\bf u} \|_{\infty}^2}{\mathbb E \left[ \| {\bf u} \|_2^2 \right] / K} = \| {\bf u} \|_{\infty}^2.
\end{equation}
Note that the vector ${\bf u}$ is composed of the $M$ sub-vectors $\bar {\bf u}_m = {\bf \Phi}_N^{\mathcal H} \bar {\bf s}_m$, $m = 0, \dots, M-1$, where $\left[ \bar {\bf u}_m \right]_n = [{\bf u}]_{nM+m}$ and $\left[ \bar {\bf s}_m \right]_n = [{\bf s}]_{nM+m}$. As a result, the instantaneous PAPR of ${\bf u}$ is the maximum one of $\bar {\bf u}_m$, $m = 0, \dots, M-1$:
\begin{equation} \nonumber
    {\rm PAPR}({\bf s}) = \max_{m \in \mathbb M} \left \| \bar {\bf u}_m \right \|_{\infty}^2 = \max_{m \in \mathbb M} \left \| {\bf \Phi}_N^{\mathcal H} \bar {\bf s}_m \right \|_{\infty}^2,
\end{equation}
where $\mathbb M = \{ 0, \dots, M-1 \}$. When $N$ is large enough, the complementary cumulative distribution function (CCDF) of $\left \| {\bf \Phi}_N^{\mathcal H} \bar {\bf s}_m \right \|_{\infty}^2$ can be approximated by ${\mathbb P} \left( \left \| {\bf \Phi}_N^{\mathcal H} \bar {\bf s}_m \right \|_{\infty}^2 > \gamma \right) \approx 1 - (1 - e^{-\gamma})^N$ \cite{Omar2021}. Since $\left \| {\bf \Phi}_N^{\mathcal H} \bar {\bf s}_m \right \|_{\infty}^2$, $m = 0, \dots, M-1$ are i.i.d. variables, the CCDF of ${\rm PAPR}({\bf s})$ is approximated by
\begin{align} \nonumber
    {\mathbb P} \left( {\rm PAPR}({\bf s}) > \gamma \right) & = 1 - {\mathbb P} \left( \left \| {\bf \Phi}_N^{\mathcal H} \bar {\bf s}_0 \right \|_{\infty}^2 \leq \gamma \right)^M \nonumber \\
    & \approx 1 - (1 - e^{-\gamma})^K, \label{PAPR_CCDF}
\end{align}
which is coincident with the CCDF of the instantaneous PAPR of the multicarrier signals of length $K$ \cite{Wang2004, Omar2021}.

When $N$ is selected to be small numbers, the approximation in (\ref{PAPR_CCDF}) is not accurate. Under such scenarios, the overall PAPR is more appropriate. The overall PAPR for VOCDM is derived by maximizing the instantaneous PAPR over all data realizations:
\begin{equation} \label{PAPR_all}
    {\rm PAPR}_{\rm all} = \max_{{\bf s} \in \mathbb S^{K \times 1}} {\rm PAPR}({\bf s}) = \max_{\bar {\bf s}_0 \in \mathbb S^{N \times 1}} \left \| {\bf \Phi}_N^{\mathcal H} \bar {\bf s}_0 \right \|_{\infty}^2.
\end{equation}
It is hard to derive the closed-form of ${\rm PAPR}_{\rm all}$ from (\ref{PAPR_all}). Using the inequality $\left \| {\bf \Phi}_N^{\mathcal H} \bar {\bf s}_0 \right \|_{\infty} \leq \left \| {\bf \Phi}_N^{\mathcal H} \bar {\bf s}_0 \right \|_2$, we derive an upper bound for ${\rm PAPR}_{\rm all}$:
 \begin{equation} \label{PAPR_all_UB}
    {\rm PAPR}_{\rm all} \leq a N,
\end{equation}
where $a = \max_{s \in \mathbb S} \| s \|_2^2$. Based on the upper bound (\ref{PAPR_all_UB}), the PAPR of VOCDM signals decreases with a decreasing $N$. On the other hand, the overall PAPR for OTFS is given by
\begin{equation} \label{PAPR_all_OTFS}
    {\rm PAPR}_{\rm all}^{\rm OTFS} = \max_{\bar {\bf s}_0 \in \mathbb S^{N \times 1}} \left \| {\bf F}_N^{\mathcal H} \bar {\bf s}_0 \right \|_{\infty}^2 = aN,
\end{equation}
where the maximum is achieved by $\bar {\bf s}_0 = b {\bf 1}_N$ with $\| b \|_2^2 = a$. Based on \eqref{PAPR_all_UB}
and \eqref{PAPR_all_OTFS}, we have $ {\rm PAPR}_{\rm all} \le  {\rm PAPR}_{\rm all}^{\rm OTFS}$. In Table~\ref{tab_PAPR_all}, we provide some examples of overall PAPRs of VOCDM and OTFS via exhaustive search. Compared to OTFS, VOCDM signals have a smaller overall PAPR.

\begin{table}
    \centering
    \caption{The overall PAPR of VOCDM and OTFS with different $N$'s and constellations.}
    \begin{tabular}{cccccc}
        \hline
        {\bf VOCDM}/OTFS & $N = 3$ & $N = 5$ & $N = 9$ & $N = 12$ \\
        \hline
        BPSK & ${\bf 2.33}/3$ & ${\bf 3.59}/5$ & ${\bf 5.28}/9$ & ${\bf 7.97}/12$ \\
        \hline
        QPSK & ${\bf 2.82}/3$ & ${\bf 4.44}/5$ & ${\bf 7.90}/9$ & ${\bf 10.09}/12$\\
        \hline
        $4$-PAM & ${\bf 4.2}/5.4$ & ${\bf 6.46}/9$ & ${\bf 9.51}/16.2$ & ${\bf 14.34}/21.6$ \\
        \hline
    \end{tabular}
    \label{tab_PAPR_all}
\end{table}

Fig.~\ref{fig_PAPR_CCDF} presents a comparison of instantaneous PAPR among different modulation schemes, where the BPSK constellation is used and the block size is set as $K = 400$. As shown in Fig.~\ref{fig_PAPR_CCDF}, the CCDF of the instantaneous PAPR of VOCDM, OTFS, OFDM, OCDM and AFDM with a large parameter $N$ is close to the theoretical CCDF, i.e., $1 - (1 - e^{-\gamma})^{400}$. However, for VOCDM and OTFS with small $N$'s, their CCDFs are significantly lower than the theoretical CCDF. Because the instantaneous PAPR is upper bounded by the overall PAPR, which tends to decrease as $N$ decreases. For both large and small $N$'s, VOCDM exhibits superior PAPR performance compared to other modulation schemes, i.e., OFDM, OTFS, OCDM and AFDM.

\begin{figure}
    \centering
    \includegraphics[width=0.85\linewidth]{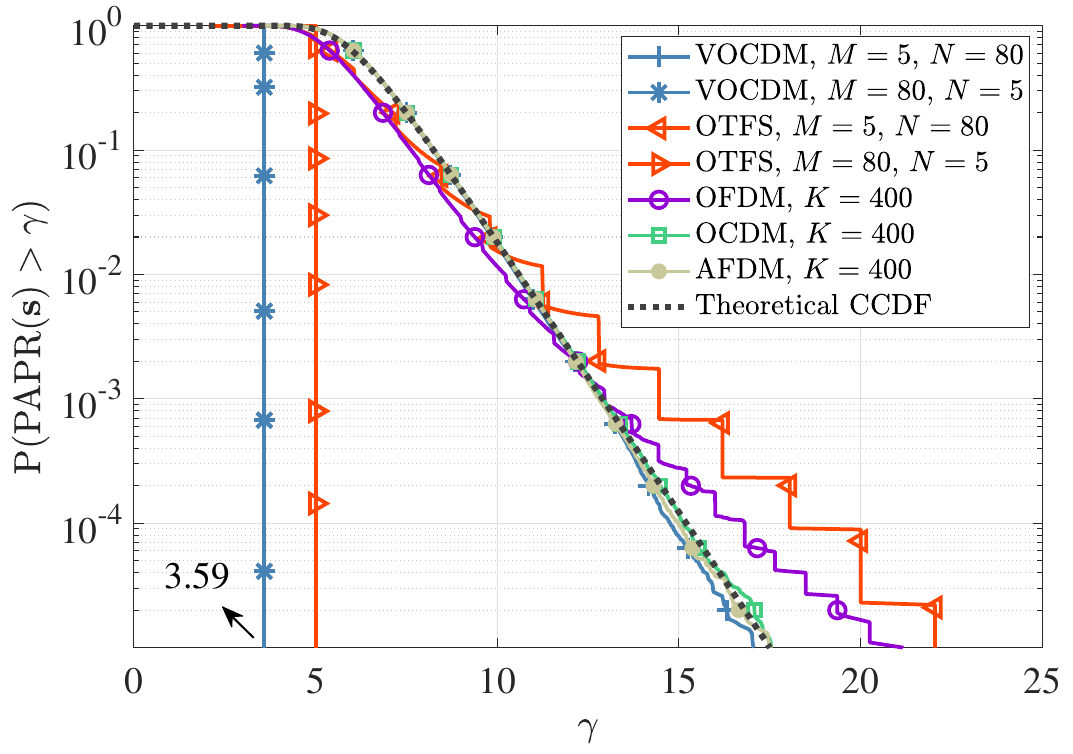}
    \caption{Comparison of instantaneous PAPR across different modulation schemes.}
    \label{fig_PAPR_CCDF}
\end{figure}

\section{Conclusions}

In this letter, we extend OCDM to VOCDM to achieve more design freedom for doubly selective channels, and analyze its performance tradeoffs w.r.t. $M$ and $N$. A diversity upper bound $|\mathbb O(L, Q, M, N)|$ is derived for VOCDM. It links the channel parameters $L$, $Q$ and the VOCDM parameters \mbox{$M$, $N$}. The proposed diversity upper bound is maximized when $M \geq L+1$ and $N \geq 2Q+1$. The instantaneous PAPR and overall PAPR of VOCDM are analyzed, respectively. The PAPR of VOCDM decreases with a decreasing $N$. Simulation results verify our theoretical findings.

\begin{appendices}

\section{Derivation of the effective channel matrix ${\bf H}_{\rm eff}$} \label{append_eff_chn_mtx}

Using (\ref{DFnT_mtx}), the IDFnT matrix ${\bf \Phi}_N^{\mathcal H}$ is circulant and can be written as ${\bf \Phi}_N^{\mathcal H} = {\rm circ}({\boldsymbol \phi}_0) = \left[ {\boldsymbol \phi}_0, {\bf \Pi}_N {\boldsymbol \phi}_0, \dots, {\bf \Pi}_N^{N-1} {\boldsymbol \phi}_0 \right]$, where ${\boldsymbol \phi}_0$ denotes the first column of ${\bf \Phi}_N^{\mathcal H}$. Consequently, ${\bf \Phi}_N^{\mathcal H} \otimes {\bf I}_M = \left[ {\boldsymbol \phi}_0 \otimes {\bf I}_M, {\bf \Pi}_N {\boldsymbol \phi}_0 \otimes {\bf I}_M, \dots, {\bf \Pi}_N^{N-1} {\boldsymbol \phi}_0 \otimes {\bf I}_M \right] = \left[ {\boldsymbol \phi}_0 \otimes {\bf t}, {\bf \Pi}_K ({\boldsymbol \phi}_0 \otimes {\bf t}), \dots, {\bf \Pi}_K^{K-1} ({\boldsymbol \phi}_0 \otimes {\bf t}) \right] = {\rm circ}({\boldsymbol \phi}_0 \otimes {\bf t})$ is also a circulant matrix, where ${\bf t} = \left[ 1, {\bf 0}_{M-1}^T \right]^T$. Using (\ref{BEM_channel_mtx}) and the commutativity between the circulant matrices ${\bf \Pi}_K^l$ and ${\bf \Phi}_N^{\mathcal H} \otimes {\bf I}_M$, the effective channel matrix ${\bf H}_{\rm eff}$ is obtained as
\begin{align}
    \! {\bf H}_{\rm eff} & = ({\bf \Phi}_N \otimes {\bf I}_M) {\bf H} \left( {\bf \Phi}_N^{\mathcal H} \otimes {\bf I}_M \right) \nonumber \\
    & = \sum_{l=0}^{L} \sum_{q=-Q}^{Q} h_{l,q} ({\bf \Phi}_N \otimes {\bf I}_M) {\bf D}_K^q \left( {\bf \Phi}_N^{\mathcal H} \otimes {\bf I}_M \right) {\bf \Pi}_K^l. \label{eff_chn_mtx_derive_1}
\end{align}
Note that the diagonal matrix ${\bf D}_K$ can be written as ${\bf D}_K = {\bf D}_N \otimes {\bf \Lambda}_M$, where ${\bf \Lambda}_M = [ {\bf D}_K ]_{0:M-1, 0:M-1}$. It leads us to
\begin{align}
    ({\bf \Phi}_N \otimes {\bf I}_M) {\bf D}_K^q \left( {\bf \Phi}_N^{\mathcal H} \otimes {\bf I}_M \right) & = {\bf \Phi}_N {\bf D}_N^q {\bf \Phi}_N^{\mathcal H} \otimes {\bf \Lambda}_M^q \nonumber \\
    & = \alpha_q {\bf D}_N^q {\bf \Pi}_N^q \otimes {\bf \Lambda}_M^q \nonumber \\
    & = \alpha_q ({\bf D}_N^q \otimes {\bf \Lambda}_M^q) ({\bf \Pi}_N^q \otimes {\bf I}_M) \nonumber \\
    & = \alpha_q {\bf D}_K^q {\bf \Pi}_K^{qM},  \label{eff_chn_mtx_derive_2}
\end{align}
where the term ${\bf \Phi}_N {\bf D}_N^q {\bf \Phi}_N^{\mathcal H} = \alpha_q {\bf D}_N^q {\bf \Pi}_N^q$ with $\alpha_q = e^{j\frac{\pi}{N}q({\rm mod}(N, 2)-q)}$ can be derived in an element-wise manner \cite{Wang2023}. Plugging (\ref{eff_chn_mtx_derive_2}) into (\ref{eff_chn_mtx_derive_1}), the effective channel matrix ${\bf H}_{\rm eff}$ is obtained in (\ref{eff_chn_mtx}).

\section{Proof of {\it Proposition~\ref{prop_para_cond}}} \label{append_para_cond}

\begin{figure*}
    \centering
    \begin{align}
       & \underbrace{ 0, \dots, L}_{\mathbb D_0} < \dots < \underbrace{QM, \dots, L+QM}_{\mathbb D_Q} < \underbrace{MN-QM, \dots, MN+L-QM}_{\mathbb D_{-Q}} < \dots < \underbrace{MN-M, \dots, MN+L-M}_{\mathbb D_{-1}}. \label{permu_orders_relation} \\
       & \rule{\textwidth}{0.4pt} \nonumber
    \end{align}
\end{figure*}

We begin by decomposing the set defined in (\ref{permu_orders}) as $\mathbb O(L, Q, M, N) = \bigcup_{q=-Q}^{Q} \mathbb O_q$, where $\mathbb O_q = \bigcup_{l=0}^L o_{l, q} = \bigcup_{l=0}^L \{ {\rm mod}(l+qM, MN) \}$. Using the inequalities $M \geq L+1$ and $N \geq 2Q+1$, we have i) $0 \leq l+qM < MN$, $\forall l \in \{ 0, \dots, L \}$, $\forall q \in \{ 0, \dots, Q \}$; and ii) $-MN < l+qM < 0$, $\forall l \in \{ 0, \dots, L \}$, $\forall q \in \{ -Q, \dots, -1 \}$. Therefore, the subset $\mathbb O_q = \{ qM, \dots, L+qM \}$, $\forall q \in \{ 0, \dots, Q \}$ and $\mathbb O_q = \{ MN+qM, \dots, MN+L+qM \}$, $\forall q \in \{ -Q, \dots, -1 \}$. It is not difficult to find that $|\mathbb O_q| = L+1$, $\forall q \in \{ -Q, \dots, Q \}$. Moreover, using the inequalities $M \geq L+1$ and $N \geq 2Q+1$ again, we establish the relationship (\ref{permu_orders_relation}) as listed on the top of this page, which implies that $\mathbb O_{q_1} \cap \mathbb O_{q_2} = \emptyset$, $\forall q_1 \ne q_2$. It leads us to $|\mathbb O(L, Q, M, N)| = \left| \bigcup_{q=-Q}^Q \mathbb O_q \right| = \bigcup_{q=-Q}^Q |\mathbb O_q| = \rho$.

\section{Proof of {\it Proposition~\ref{prop_div_UB}}} \label{append_div_UB}

Let us recall the definition of ${\bf C}({\bf s}, {\bf e})$ in (\ref{def_C}), and reorganize the columns of ${\bf C}({\bf s}, {\bf e})$ as follows:
\begin{equation} \label{def_C_tau}
     \!\!\! {\bf C}_{\tau}({\bf s}, {\bf e}) = \left[ {\bf c}_{l_1^{\tau}, q_1^{\tau}}, \dots, {\bf c}_{l_{B_{\tau}}^{\tau}, q_{B_{\tau}}^{\tau}} \right], \tau \in \mathbb O(L, Q, M, N),
\end{equation}
where the pairs $(l_1^{\tau}, q_1^{\tau}), \dots, (l_{B_{\tau}}^{\tau}, q_{B_{\tau}}^{\tau})$ are drawn from the set
\begin{equation} \nonumber
    \mathbb B_{\tau} = \{ (l, q) : l \in \{ 0, \dots, L \}, \, q \in \{ -Q, \dots, Q \}, \  o_{l, q} = \tau \} \nonumber
\end{equation}
with an arbitrary order, and the number $B_{\tau} = |\mathbb B_{\tau}|$. Recalling the definition of ${\bf c}_{l, q}$ in (\ref{def_c}), and plugging it into (\ref{def_C_tau}), we obtain ${\bf C}_{\tau}({\bf s}, {\bf e}) = \left[ \alpha_{q_1^{\tau}} {\bf D}_K^{q_1^{\tau}} {\bf \Pi}_K^{\tau} {\bf e}, \dots, \alpha_{q_{B_{\tau}}^{\tau}} {\bf D}_K^{q_{B_{\tau}}^{\tau}} {\bf \Pi}_K^{\tau} {\bf e} \right]$. For any realization of ${\bf s} \in \mathbb S^{K \times 1}$ with any constellation $\mathbb S$, there always exists a pairwise error ${\bf e}_1 = [\epsilon, {\bf 0}_{K-1}^T]^T$, where $\epsilon$ is a non-zero number. When $B_{\tau} > 1$, the vectors $\alpha_{q_i^{\tau}} {\bf D}_K^{q_i^{\tau}} {\bf \Pi}_K^{\tau} {\bf e}_1 = \left[ {\bf 0}_{\tau}^T, \alpha_{q_i^{\tau}} e^{j\frac{2\pi}{K} q_i^{\tau} \tau} \epsilon, {\bf 0}_{K-\tau-1}^T \right]^T$, $i = 1, \dots, B_{\tau}$ are mutually dependent. Given the error ${\bf e}_1$, we have ${\mathcal R}({\bf C}_{\tau}({\bf s}, {\bf e}_1)) = 1$, $\forall \tau \in \mathbb O(L, Q, M, N)$. Note that the columns of the matrix ${\bf C}({\bf s}, {\bf e})$ are composed of the columns of the matrices ${\bf C}_{\tau}({\bf s}, {\bf e})$, $\tau \in \mathbb O(L, Q, M, N)$. Thus, the data-dependent diversity can be upper bounded as follows:
\begin{align}
    G_d({\bf s}) & \leq \min_{{\bf e} \ne {\bf 0}_K} \sum_{\tau \in \mathbb O(L, Q, M, N)} {\mathcal R}({\bf C}_{\tau}({\bf s}, {\bf e})) \nonumber \\
    & \leq \sum_{\tau \in \mathbb O(L, Q, M, N)} {\mathcal R}({\bf C}_{\tau}({\bf s}, {\bf e}_1)) \nonumber \\
    & = |\mathbb O(L, Q, M, N)|. \nonumber
\end{align}

\end{appendices}

\bibliographystyle{IEEEtran}
\bibliography{cite}

\end{document}